\documentclass[12pt]{article}

\usepackage{amsmath}
\usepackage{amssymb}     
\usepackage{fullpage}    
\usepackage{float}       
\usepackage{enumitem}    
\usepackage{mathpazo}    
\usepackage{inconsolata} 
\usepackage{hhline}      

\usepackage[numbers]{natbib}
\bibliographystyle{plainnat}
\newcommand{\textcite}{\citet} 
\newcommand{\autocite}{\citep}

\usepackage{amsthm}      
\newtheorem{theorem}{Theorem}
\newtheorem{corollary}{Corollary}[theorem]
\newtheorem{lemma}{Lemma}

\usepackage{hyperref}    
\hypersetup{
	pdfinfo={
		Title={The Broadcaster Repacking Problem},
		Author={William K.~Schwartz},
	}
}

\newcommand{\N}{\mathbb{N}} 
\DeclareMathOperator{\BRPF}{BRPF}
\DeclareMathOperator{\BRP}{BRP}
\newcommand{\introduce}[1]{\textbf{\textit{#1}}} 

\begin{document}
\title{The Broadcaster-Repacking Problem}
\author{
	William K. Schwartz\thanks{
		Department of Applied Mathematics, Illinois Institute of Technology, Chicago, IL 60616 \hfill\break(\href{mailto:wschwart@hawk.iit.edu}{\nolinkurl{wschwart@hawk.iit.edu}}).
	}
}
\maketitle

\begin{abstract}
	The Federal Communications Commission's (FCC's) ongoing Incentive Auction will, if successful, transfer billions of dollars of radio spectrum from television broadcasters to mobile-network operators.
	Hundreds of broadcasters may go off the air.
	Most of those remaining on the air, including hundreds of Canadian broadcasters not bidding, will have to move to new channels to continue broadcasting.
	The auction can only end if all these broadcasters will fit into the spectrum remaining for television.
	Whether a given set of broadcasters fits is the broadcaster-repacking problem.
	The FCC must calculate its solutions thousands of times per round of bidding.
	Speed is essential.

	By reducing the broadcaster-repacking problem to the maximum independent set problem, we show that the former is $\mathcal{NP}$-complete.
	This reduction also allows us to expand on sparsity-exploiting heuristics in the literature, which have made the FCC's repacking-problem instances tractable.
	We conclude by relating the heuristics to satisfiability and integer programming reductions.
	These provide a basis for implementing algorithms in off-the-shelf software to solve the broadcaster-repacking problem.
\end{abstract}

\section{Introduction}

	In the ongoing Broadcaster Incentive Auction \autocite{fcc-stages}, the Federal Communications Commission (\introduce{FCC}) is trying to buy radio spectrum from 1,877 broadcast television stations \autocite{fcc-reverse-opening-prices} to sell to 62 mobile-network operators, including Verizon, AT\&T, and T-Mobile \autocite{fcc-forward-auction-qualified-bidders}.
	The Incentive Auction comprises pairs of alternating sub-auctions occurring in \introduce{stages}.
	In each stage, the FCC sets the \introduce{clearing target} in megahertz of how much spectrum the FCC aims to transfer.
	The first sub-auction, the \introduce{Reverse Auction}, allows broadcasters to bid for the right sell spectrum to the FCC.
	The FCC accepts enough bids to satisfy this stage's clearing target.
	Then the second sub-auction, the \introduce{Forward Auction}, allows the networks to bid for the frequencies that the stage's Reverse Auction made available.
	After a pair of Reverse and Forward Auctions do not clear the market, the FCC lowers the clearing target and starts a new stage.
	At the time of this writing, the first stage did not clear the market and the second stage is halfway done.
	In the first stage, the broadcasters agreed to sell 120 MHz at \$86.4 billion, but the networks agreed to buy 120 MHz at only \$23 billion.
	In the second stage, the broadcasters agreed to sell 114 MHz at \$54.6 billion \autocite{Gibbs:2016ti}.
	The second stage's Forward Auction has not yet completed \autocite{fcc-auction-dashboard}.
	Once a stage completes with the Forward Auction's revenue exceeding the Reverse Auction's costs (and some other complicated rules), the Incentive Auction is over.
	The bids in the final stage are binding.

	Winners of the Reverse Auction sell their broadcast licenses back to the FCC and then go out of business.\footnotemark
	\footnotetext{
		Broadcasters can also win money in the Reverse Auction not to go off the air but to share a channel with another TV station.
		We ignore this complication.
	}
	Losers of the Reverse Auction sell no frequencies and continue broadcasting after the Incentive Auction closes, but on new channels.
	The government chooses the new channels so the losing broadcasters occupy the low end of the spectrum, leaving the high end for the mobile networks.
	The FCC must assign the channels so that the stations do not interfere with one another's signals.
	This reassigning process is called \introduce{repacking}.
	To ensure that the Reverse Auction can end with all the losing bidders being able to continue broadcasting, the FCC must check at each step through the Reverse Auction that such a repacking exists.
	Determining whether a feasible repacking exists is the \introduce{broadcaster-repacking problem} (\introduce{BRP}).
	Because of the size and structure of the Reverse Auction, the FCC needs to calculate thousands of BRP instances very quickly \autocite{Frechette:2016it}.
	Our concern in this paper is repacking.
	We will not discuss the auction structure in more detail.

	In repacking, the FCC obeys three sets of constraints: the domain constraints, listing which channels a station may occupy; the interference constraints, listing which stations cannot be on certain channels at the same time; and the at-most constraints, listing that each station may occupy at most one channel.
	(Other software determines what these constraints are. See \autocite{fcc-tvstudy}.)
	On its face, this problem is similar to graph coloring because it involves making assignments to avoid given sets of pairwise relationships.
	As we will show, it is more convenient to solve the BRP by thinking of it as a maximum independent set problem.
	From a BRP instance, we will construct a graph whose edges are the interference constraints and whose vertices are not stations, but station-channel pairs.
	Solving the maximum independent set problem on this constraint graph is equivalent to the BRP, and allows some heuristic simplifications of the problem that are not obvious from reductions to other $\mathcal{NP}$-complete problems.

	The FCC is using a satisfiability-cum-local-search solver called SATFC designed by a team at University of British Columbia.
	The software is freely available \autocite{satfc}.
	This author also implemented a satisfiability-based solver (with somewhat different design requirements), named TVRepack, starting in late August, 2013 \autocite{tvrepack}.

	\subsection{Prior work}

		To our knowledge, the SATFC team, led by Kevin Leyton-Brown, Neil Newman, and Alexandre Fr\'e\-chette, has published the only academic paper on the BRP \autocite{Frechette:2016it}.
		It gives only a minimal mathematical description of the BRP, instead focusing on the authors' specialized solver configuration and novel caching techniques.

		The FCC's auction regulations give a mathematical description of the BRP as a zero-one integer program.
		This description is unambiguous enough for the legal and political goals of ensuring all auction participants understand how repacking will work, but the notation is a tad clumsy for proving general theorems.
		The novelty of this paper is the notation, the rigorous proof of the BRP's $\mathcal{NP}$-completeness, the definition and use of the ``constraint graph'', and an extension of decomposition heuristics of \textcite{Frechette:2016it}.

	\subsection{Outline}

		The remainder of the paper proceeds as follows:

		Section~\ref{model} details the mathematical model for the BRP.
		The effort has two goals.
		First, it provides the language that later sections will use for deriving theorems about the BRP.
		Second, it introduces concepts important for understanding the FCC's data set and some of its special structure.

		Section~\ref{brp-as-a-graph-problem} reformulates the BRP as a graph problem by constructing a graph whose maximum independent sets are solutions to the BRP.
		It then formally proves $\mathcal{NP}$-completeness.

		Section~\ref{decomposition} generalizes the heuristics that \textcite{Frechette:2016it} proposed.
		It also explores the FCC's BRP constraints data set to gauge how these decomposition heuristics affect the FCC's problem instances.

		Section~\ref{sec:zoip} provides reductions of the BRP to satisfiability and integer programming as complements to the independent-set approach.
		These are formulations that off-the-shelf optimization libraries can solve directly.

		Section~\ref{sec:future-research} concludes with a discussion of future research possibilities.

\section{Model and Notation}\label{model}

	The broadcaster-repacking problem is a generalization of list coloring.
	In \introduce{list coloring}, we are given a simple graph $G$ (an undirected graph with neither loops nor multiple edges) and \introduce{lists} $L(v)$ for each vertex $v$.
	The lists are sets of \introduce{colors}, typically natural numbers, with which we are allowed to paint the corresponding vertices.
	The problem is to paint each vertex using colors only from the corresponding list while never painting neighboring vertices the same color.

	Broadcaster repacking differs from list coloring superficially in terminology: vertices become television stations, colors become channels, lists become domains, and painting becomes assigning.
	More substantively, the BRP relaxes list coloring's rule forbidding neighbors from being painted the same color by specifying different interference constraints between stations depending on the channel.
	A BRP instance might stipulate that station WABC and station KXYZ cannot both broadcast on channel five but they may concurrently occupy channel nine.
	Another wrinkle comes from offset constraints: WABC cannot be on channel six while KXYZ is on channel seven.
	In the FCC's Incentive Auction, the physics of broadcasting on radio frequencies and laws protecting the economic interests of established broadcasters necessitate these complicated interference patterns.
	Consequently, list coloring is not quite an expressive enough model for the BRP without a little more work.
	The following subsections undertake that work.

	\subsection{Constraints}\label{sec:constraints}

		Let the sets \(S, C \subset \mathbb{N}\) be finite sets of \introduce{stations} and \introduce{channels} respectively.
		For some \(S' \subseteq S\), an \introduce{assignment} \(f : S' \to C\) maps each station \(s \in S'\) to a channel \(f(s) \in C\).
		The assignment is \introduce{complete} if \(S' = S\), that is, if \(f\) is defined for every station.
		Finally, an assignment \(f : S' \to C\) is \introduce{feasible} if it satisfies the following three constraints.

		\begin{enumerate}
			\def\labelenumi{\arabic{enumi}.}
			\item
				The \introduce{interference constraints} are elements of some given set \(I\) where
				\[
					I \subseteq (S \times C)^2 = \left\{\left((s_1, c_1), (s_2, c_2)\right) \mid s_1,s_2 \in S \text{ and } c_1,c_2 \in C\right\}.
				\]
				They indicate that, if the \introduce{protected station} \(s_1\) is assigned to channel \(c_1\), then the \introduce{interfering station} \(s_2\) cannot be assigned to channel \(c_2\).
				More precisely, \(f\) is feasible only if \(f(s_1) = c_1 \implies f(s_2) \ne c_2\) for all \(((s_1, c_1), (s_2, c_2)) \in I\).
			\item
				The \introduce{domain constraints} are some given sets \(C_s \subseteq C\) for each \(s \in S\).
				They indicate that station \(s\) may be assigned only channels in \(C_s\).
				For convenience we bundle all domain constraints into a set \(D = \left\{C_s\right\}_{s \in S}\).
				More precisely, \(f\) is feasible only if \(f(s) \in C_s\) for all \(s \in S\).
			\item
				The \introduce{at-most constraints} require that each station be assigned at most one channel.
				This is redundant with the definition of \(f\) as a function, but having a name for it will be convenient when we consider subsets of \(S \times C\) later.
		\end{enumerate}

	\subsection{Simplifying assumptions}\label{simplifying-assumptions}

		Without loss of generality, we may assume that \(C = \bigcup_{s \in S}C_s\) and that the sets of stations represented in \(D\) and in \(I\) both equal \(S\).
		With these assumptions, the \introduce{input data} of a problem instance are
		\begin{enumerate}
			\item the set \(I\) of interference constraints, and
			\item the set \(D\) of domain constraints.
		\end{enumerate}
		The BRP instances the FCC needs to solve in the Reverse Auction are subproblems of the BRP instance whose input data are the interference and domain constraints \((I, D)\) found in the FCC's published constraint files \autocite{fcc-constraint-files}.

		\(I\) is actually much smaller than \((S \times C)^2\).
		We assume that \(I\) has no interference constraints of the form \(((s, c_1), (s, c_2))\), as these would be redundant with the at-most constraints.
		Further, we assume that there exists a finite set of \introduce{offsets} \(\mathcal{O} \subset \mathbb{Z}\) so that every element of \(I\) is of the form \(\left((s_1, c), (s_2, c + k)\right)\) for some \(s_1, s_2 \in S\), \(c \in C\), and \(k \in \mathcal{O}\).
		When $k = 0$, the FCC calls such an interference constraint a \introduce{co-channel} constraint; otherwise, it's an \introduce{adjacent $k$} constraint.
		The FCC's data has $\mathcal{O} = \{-2, -1, \ldots, 2\}$, and whenever two stations share an adjacent $\pm1$ constraint, they also share a co-channel constraint \autocite[Appendix~K, \S~2.2]{fcc-application-procedures}.
		In general, the physics of radio broadcasting allows us to assume that \(\mathcal{O}\) is a set of consecutive integers symmetric about zero: \(\mathcal{O} = \{-\max{\mathcal{O}}, -\max{\mathcal{O}} + 1, \ldots, \max{\mathcal{O}}\}\) for some non-negative integer \(\max{\mathcal{O}}\).
		The FCC's data have $\max{\mathcal{O}} = 2$.
		See section~\ref{decomposition} for more information about the FCC's data.

	\subsection{Feasibility and optimality}\label{feasibility-and-optimality}

		For a given pair of interference and domain constraints \((I, D)\) with
		corresponding station and channel sets \(S\) and \(C\), define the \introduce{feasible set} \(\BRPF(I, D)\) of the problem instance $(I, D)$ to be the set of assignments \(f : S \to C\) that are feasible with respect to \(I\) and \(D\) and complete with respect to \(S\).
		The \introduce{feasibility} variant of the BRP is to return zero if $\BRPF(I, D)$ is empty and one if it is non-empty.

		Unfortunately, consistency with both the FCC's terminology and common computer-science jargon requires our reusing \textit{feasible} in two contexts.
		A BRP instance is feasible when $\BRPF(I, D) \ne \emptyset$.
		A specific assignment is feasible (and thus possibly an element of \(\BRPF(I, D)\)) if it satisfies the list of constraints in Section~\ref{sec:constraints}.

		Given an assignment \(f : S' \to C\) for some $S' \subseteq S$, define the \introduce{cost} of an assignment to be the maximum channel $\|f\| = \max_{s \in S'}f(s)$ that $f$ assigns to any station.
		The \introduce{optimization} version of the BRP is to return
		\[
			\BRP(I, D) = \operatorname*{arg\,min}_{f \in \BRPF(I, D)}\|f\| = \operatorname*{arg\,min}_{f \in \BRPF(I, D)}\max_{s \in S}f(s).
		\]

		Minimizing the maximum channel is straightforward to describe mathematically and to interpret economically: it gives the maximum total amount of spectrum that the Incentive Auction could free up for mobile-network operators.
		Other objective functions, such as those the FCC defines \autocite{fcc-incentive-auction-comment-pn}, are also useful, but we just focus on $\|f\|$ here.

	\subsection{Subproblems}\label{subproblems}

		We are often interested in subproblems formed from \(\BRPF(I, D)\) by restricting ourselves to a subset of stations or a subset of channels.
		Given an input pair of interference and domain constraints \((I, D)\), let \(S\) and \(C\) be the corresponding sets of stations and channels.

		When we write \(\BRPF^{T}(I, D)\) for some \(T \subseteq S\) we implicitly take \(T\) as the set of stations, ignoring domain and interference constraints referring to stations in \(S \setminus T\).
		In this case, we consider assignments complete if and only if they are defined for all stations in \(T\).
		Stations in \(S \setminus T\) are said to be \introduce{cleared}, whereas we are \introduce{repacking} stations in \(T\).
		More precisely, if \(T \subseteq S\), we define
		\[
			\BRPF^T(I, D) = \BRPF\Big(
				\big\{\!
					\left((s_1, c_1), (s_2, c_2)\right) \in I \mid s_1, s_2 \in T
				\big\},
				\{C_s \in D \mid s \in T\}
			\Big).
		\]

		When we write \(\BRPF_{c^*}(I, D)\) for some \(c^* \in C\) we implicitly replace every domain constraint \(C_s\) with a new domain constraint \(\{c \in C_s \mid c \le c^*\}\).
		\(c^*\) is called the \introduce{clearing target}.
		More precisely, if \(c^* \in C\), we define
		\[
			\BRPF_{c^*}(I, D) = \BRPF\Big(
				I,
				\big\{\{c \in C_s \mid c \le c^*\} \subseteq C_s \mid C_s \in D\big\}
			\Big).
		\]

		If we write \(\BRP^T(I, D)\) or \(\BRP_{c^*}(I, D)\), we mean the optimization problem restricted to \(\BRPF^T(I, D)\) or \(\BRPF_{c^*}(I, D)\) respectively.
		When we refer in the abstract to \introduce{subproblems} of the BRP, we mean one of these two types of restrictions.

	\subsection{Summary of notation}\label{summary-of-notation}

		A BRP instance's input data are the pair \((I, D)\) and all the other data are defined implicitly from them.

		\begin{table}[H]
			\caption{Summary of Notation} \label{tab:title}
			\begin{tabular}[h]{ || p{5.4em} | p{18em} | p{13.9em} || }
				\hhline{||=|=|=||}
						\textbf{Symbol}
					&
						\textbf{Description}
					&
						\textbf{Notes}
				\\
				\hhline{||=|=|=||}
						\(S\)
					&
						stations
					&
						\(S \subseteq \mathbb{N}\)
				\\
				\hline
						\(C\)
					&
						channels
					&
						\(C \subseteq \mathbb{N}\)
				\\
				\hline
						\(f\)
					&
						assignment
					&
						\(f : S' \to C\) for $S' \subseteq S$
				\\
				\hline
						\(f(s)\)
					&
						station \(s\)'s assignment
					&
						if \(f\) is feasible, then $f(s) \in C_s$
				\\
				\hline
						\(I\)
					&
						interference constraints
					&
						\(I \subseteq (S \times C)^2\)
				\\
				\hline
						\(\mathcal{O}\)
					&
						offsets; constraint on structure of \(I\)
					&
						\(\mathcal{O} = \{-k, \ldots, k\}\)
				\\
				\hline
						\(C_s\)
					&
						station \(s\)'s domain
					&
						\(C_s \subseteq C\)
				\\
				\hline
						\(D\)
					&
						domain constraints
					&
						\(D = \{C_s\}_{s \in S}\); \(C = \bigcup D\)
				\\
				\hline
						\((I, D)\)
					&
						constraints
					&
						input data for a BRP instance
				\\
				\hline
						\({\BRPF(I, D)}\)
					&
						set of complete, feasible assignments
					&
					feasible with respect to \(I\) and \(D\), complete with respect to \(S	\)
				\\
				\hline
						\(\|f\|\)
					&
						cost of assignment \(f\)
					&
						$\max_{s \in S'}f(s)$
				\\
				\hline
						\(\BRP(I, D)\)
					&
						minimum-cost, complete, feasible assignment
					&
						defined only if \(\BRPF(I, D) \ne \emptyset\)
				\\
				\hline
						\(\BRPF^T(I, D)\)
					&
						\(\BRPF(I, D)\) but only with stations \(T\)
					&
						pack $T$, clear $S \setminus T$
				\\
				\hline
						\(\BRPF_{c^*}(I, D)\)
					&
						\(\BRPF(I, D)\) but only with channels~${\le c^*}$
					&
						\(c^*\) is the clearing target
				\\
				\hhline{||=|=|=||}
			\end{tabular}
		\end{table}

\section{The BRP as a graph problem}\label{brp-as-a-graph-problem}

	\subsection{The BRP's constraint and interference graphs}\label{sec:brp-graph-defs}

		To facilitate translating the BRP into more familiar graph problems, we define the interference and constraint graphs of a BRP instance and consider some of their subgraphs.
		For the remainder of this section, let \((I, D)\) be constraints defining a BRP instance, and let \(S\) and \(C\) be their corresponding sets of stations and channels.

		We want to work with undirected graphs, which Lemma~\ref{thm:undirectedness} below allows.
		It says that the order of interference constraints does not matter.
		We should just think of interference constraints as saying that two stations cannot be on certain channels at the same time.

		\begin{lemma}[Undirectedness]\label{thm:undirectedness}
			Suppose \(i = ((s_1, c_1), (s_2, c_2)) \in I\) and $i'=((s_2, c_2), (s_1, c_1))$.
			Then \[
				\BRPF(I, D)
				= \BRPF\left(I \cup \{i'\}, D\right)
				= \BRPF\left((I \setminus \{i\}) \cup \{i'\}, D\right).
			\]
		\end{lemma}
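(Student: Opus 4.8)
The plan is to reduce the statement to the elementary fact that an implication is logically equivalent to its contrapositive. First, recall that by the simplifying assumptions of Section~\ref{simplifying-assumptions} the station set $S$ and channel set $C$ are determined implicitly by $(I,D)$: the set $C = \bigcup_{s\in S}C_s$ depends only on $D$, and $S$ is the set of stations mentioned in $I$ (equivalently in $D$). Replacing $i$ by $i'$, or adjoining $i'$ to $I$, changes neither $D$ nor the set of stations mentioned by the interference constraints (the pair $i'$ still mentions both $s_1$ and $s_2$), so all three instances share the same $S$, the same $C$, hence the same universe of candidate complete assignments $f\colon S\to C$ and the same domain and at-most constraints. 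It therefore suffices to show that a complete assignment $f\colon S\to C$ satisfying all domain constraints satisfies the interference constraints of $I$ exactly when it satisfies those of $I\cup\{i'\}$, and exactly when it satisfies those of $(I\setminus\{i\})\cup\{i'\}$.

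Next I would isolate the key observation: for a fixed complete assignment $f$, the condition imposed by $i$, namely $f(s_1)=c_1\implies f(s_2)\ne c_2$, is the contrapositive of --- hence logically equivalent to --- the condition imposed by $i'$, namely $f(s_2)=c_2\implies f(s_1)\ne c_1$. Completeness of $f$ enters only in guaranteeing that $f(s_1)$ and $f(s_2)$ are both defined, so there is no subtlety about vacuous satisfaction arising from an undefined value; but since $\BRPF$ is defined purely in terms of complete assignments, this is automatic.

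Given this observation, the two equalities follow by routine bookkeeping. An assignment lies in $\BRPF(I,D)$ iff it is complete, satisfies every domain constraint, and satisfies the implication attached to every member of $I$; since $i\in I$, any such assignment automatically satisfies the implication attached to $i'$ as well, so adjoining $i'$ adds no restriction and $\BRPF(I,D)=\BRPF(I\cup\{i'\},D)$. For the other equality, membership in $\BRPF\big((I\setminus\{i\})\cup\{i'\},D\big)$ requires satisfying the implications attached to every member of $I\setminus\{i\}$ together with the one attached to $i'$; by the key observation the last of these may be traded for the implication attached to $i$, which recovers precisely the requirement for membership in $\BRPF(I,D)$.

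I do not anticipate a genuine obstacle here. The only points requiring a little care are the ``without loss of generality'' step concerning $S$ and $C$ --- checking that perturbing $I$ in these ways does not silently alter the implicitly defined ground sets --- and the potential partial-versus-complete-assignment issue, which the definition of $\BRPF$ sidesteps by quantifying over complete $f$ only. Everything else is the contrapositive identity and set manipulation.
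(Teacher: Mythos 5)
Your proposal is correct and rests on exactly the same key fact as the paper's proof: the implication imposed by $i$ is the contrapositive of, hence logically equivalent to, the one imposed by $i'$. The additional bookkeeping about the ground sets $S$ and $C$ and about completeness is sound but not something the paper's (one-line) proof bothers to spell out.
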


		\begin{proof}
			Apply the contrapositive to the definition an interference constraint:
			\begin{align*}
				\left(f(s_1) = c_1 \implies f(s_2) \ne c_2\right)
				&\iff \left(f(s_2) = c_2 \implies f(s_1) \ne c_1\right)
				\qedhere
			\end{align*}
		\end{proof}

		From now on, we will not worry about the order of the station-channel pairs in an interference constraint.
		Instead of thinking of \(i = ((s_1, c_1), (s_2, c_2)) \in I\) as an ordered pair, we will think of $i = \left\{(s_1, c_1), (s_2, c_2)\right\} \in I \subseteq [S \times C]^2$, where the notation $[Z]^k$ for a set $Z$ means all the subsets of $Z$ of size $k$.\footnotemark

		\footnotetext{
			Why didn't we just define $I$ this way to begin with?
			The FCC's constraint files and documentation are written as though order does matter.
			Lemma \ref{thm:undirectedness} permits us to interpret the files more flexibly.
		}

		Following \autocite{Frechette:2016it}, define the \introduce{interference graph} of \(\BRPF(I, D)\) as the simple graph whose vertex set is \(S\) and which has an edge between two stations if the stations share any interference constraint in \(I\).
		The interference graph $H$ superimposes all the interference constraints but carries none of the domain or at-most constraints.

		The constraint graph $G$ incorporates information about all the constraints all at once.
		Define the \introduce{constraint graph} of \(\BRPF(I, D)\) as the simple graph \(G\) in the following way: The vertices are station-channel pairs where stations are paired only with channels in the station's domain, so that
		\[
			V(G) = \left\{(s, c) \in S \times C \mid c \in C_s\right\}.
		\]
		The set of edges comes from the interference and at-most constraints.
		Two station-channel pairs in $V(G)$ for different stations share an edge if they share an interference constraint: \(\left(I \cap [V(G)]^2\right) \subseteq E(G)\).
		Two station-channel pairs in $V(G)$ for the same station always share an edge because of the at-most constraints: $A \subseteq E(G)$ where
		\begin{equation}\label{eq:atmost-edges}
			A = \left\{\left\{(s, c_1), (s, c_2)\right\} \in [S \times C]^2 \mid c_1, c_2 \in C_s \text{ and } c_1 \ne c_2\right\}.
		\end{equation}
		Together these are all of the edges:
		\[
			E(G) = \left(I \cap [V(G)]^2\right) \cup A.
		\]

		\begin{theorem}\label{thm:polynomial-time}
			The time complexity of  constructing a constraint graph \(G\) for \(\BRPF(I, D)\) from input \((I, D)\) is bounded by \(O(|I| + |D||\bigcup{D}|^2)\).
			The names of the stations and the names of the channels do not enter into this bound, so the time complexity is polynomial in the size of the input $(I, D)$ under a bit model of computation.
		\end{theorem}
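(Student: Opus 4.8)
The plan is to give an explicit three-pass construction, bound the work of each pass in terms of the counts $|I|$, $|D|$, and $|\bigcup D|$ only, and then observe that each of those counts is itself at most the bit length of the input. First I would pass over $D$: for each station $s$ and each channel $c \in C_s$, emit the vertex $(s,c)$ and insert it into a lookup table keyed on that pair. Second, for each station $s$ I would run over all $\binom{|C_s|}{2}$ unordered pairs of distinct channels in $C_s$, emitting the corresponding at-most edge; this yields exactly the set $A$ of \eqref{eq:atmost-edges}. Third, I would pass over $I$: for each constraint $\{(s_1,c_1),(s_2,c_2)\}$ I would query the lookup table for both endpoints --- i.e.\ test $c_1 \in C_{s_1}$ and $c_2 \in C_{s_2}$ --- and emit the edge exactly when both are present, producing $I \cap [V(G)]^2$. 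By the definition of $E(G)$ in Section~\ref{sec:brp-graph-defs}, the outputs of the last two passes together are precisely $E(G)$, and the first pass produces $V(G)$, so the procedure is correct.

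For the running time, the first pass performs $|V(G)| = \sum_{s}|C_s| \le |D|\,|\bigcup D|$ unit steps, since there are $|D|$ domains each of size at most $|\bigcup D|$. The second pass performs $\sum_{s}\binom{|C_s|}{2} \le \sum_{s}|C_s|^2 \le |D|\,|\bigcup D|^2$ steps. The third pass performs $O(|I|)$ steps, one table query per interference constraint. Adding these and absorbing the middle term gives the bound $O\!\big(|I| + |D|\,|\bigcup D|^2\big)$. Crucially, every quantity here counts stations, channels, domain entries, or constraints; none of them refers to the numerical magnitude of any station or channel name, which is the content of the theorem's second sentence.

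The step I expect to need the most care is reconciling that last claim with the lookup table, since comparing or hashing names is not free in a bit model. I would handle it in one of two equivalent ways. On a unit-cost word RAM in which each name fits in a machine word, a hash table (or a two-dimensional array after one preliminary sort that relabels the $|S|$ names to $1,\dots,|S|$ and the $|C|$ names to $1,\dots,|C|$) gives genuine $O(1)$ inserts and queries, so the count above is the literal running time. In a bit model, keying on names of $b$ bits costs only $O(b\log b)$ per operation, and $b$ is at most the input length $N$, so the whole bound grows by at most a $\operatorname{polylog}(N)$ factor. Either way I would finish by noting that each of the $|S|$ stations, each channel in $C = \bigcup D$, and each of the $|I|$ constraints needs at least one bit to encode, so $|D|=|S|\le N$, $|\bigcup D|=|C|\le N$, and $|I|\le N$; hence $|I| + |D|\,|\bigcup D|^2 \le N + N^3$, which is polynomial in $N$.
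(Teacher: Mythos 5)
Your proposal is correct and follows essentially the same construction as the paper's proof: build $V(G)$ by one pass over the domains ($\sum_s |C_s| \le |D|\,|\bigcup D|$ steps), form the at-most edges $A$ in $\sum_s \binom{|C_s|}{2} \le |D|\,|\bigcup D|^2$ steps, and filter $I$ against $V(G)$ in $O(|I|)$ steps. The only difference is that you are more careful than the paper about the cost of membership queries under a bit model (the relabeling/polylog-factor discussion), which is a worthwhile refinement but not a different route.
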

		\begin{proof}
			Construct \(V(G)\) by iterating through each element of each domain constraint \(C_s\), which means looking at \(\sum_{s \in S}|C_s| \le |S||C|\) numbers.
			Construct \(E(G)\) by removing elements of \(I\) not in \(V(G)\), which can be done in \(O(|I|)\) steps, and by forming the at-most constraints \(A\), which can be done in \(O\left(\sum_{s \in S}\binom{|C_s|}{2}\right) \subseteq O(|S||C|^2)\) steps.
			All together, constructing \(G\) requires \(O(|S||C| + |I| + |S||C|^2) \subseteq O(|I| + |S||C|^2)\) steps.
			Every station has exactly one domain constraint, so \(|S| = |D|\). Finally, \(C = \bigcup{D}\).
		\end{proof}

		The size of the interference constraints are also bounded:
		\[
			|I| \le |[S \times C]^2| = |D|^2 |\bigcup D|^2.
		\]
		In the worst case, constructing the constraint graph takes $O(|D|^2 |\bigcup D|^2)$ steps.

		\begin{theorem}[Subproblems] \label{thm:subproblems}
			Consider \(\BRPF(I, D)\) with constraint graph G and a clearing-target subproblem \(\BRPF_{c^*}(I, D)\) or cleared-stations subproblem \(\BRPF^T(I, D)\) with constraint graph \(G'\).
			\(G'\) is an induced subgraph of \(G\).
		\end{theorem}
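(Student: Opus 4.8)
The plan is to unwind the definition of \emph{induced subgraph}: $G'$ is an induced subgraph of $G$ exactly when $V(G') \subseteq V(G)$ and $E(G') = E(G) \cap [V(G')]^2$. So it suffices to verify these two conditions, and I would do the clearing-target case and the cleared-stations case in parallel. First I would name the modified constraints. For $\BRPF_{c^*}(I,D)$ the interference set is unchanged and each domain shrinks to $C_s' = \{c \in C_s : c \le c^*\}$; for $\BRPF^T(I,D)$ each surviving domain is unchanged, the domains of stations $s \notin T$ disappear, and the interference set becomes $I^T = \{\{(s_1,c_1),(s_2,c_2)\} \in I : s_1,s_2 \in T\}$. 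Writing $W = V(G')$, the definition $V(G) = \{(s,c) : c \in C_s\}$ gives $W = \{(s,c) : c \in C_s,\ c \le c^*\}$ in the first case and $W = \{(s,c) : s \in T,\ c \in C_s\} = V(G) \cap (T \times C)$ in the second; in either case $W \subseteq V(G)$, which settles the first condition.

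For the edge condition I would start from $E(G) = (I \cap [V(G)]^2) \cup A$ and intersect with $[W]^2$. Since $W \subseteq V(G)$ forces $[W]^2 \subseteq [V(G)]^2$, the intersection distributes cleanly:
\[
	E(G) \cap [W]^2 = \bigl(I \cap [W]^2\bigr) \cup \bigl(A \cap [W]^2\bigr),
\]
and it remains to match each piece against the corresponding piece of $E(G') = (I' \cap [W]^2) \cup A'$, where $I'$ and $A'$ are the interference and at-most edge sets of the subproblem. In the cleared-stations case, an interference edge with both endpoints in $W$ has both its stations in $T$, so $I \cap [W]^2 = I^T \cap [W]^2$; and an at-most edge $\{(s,c_1),(s,c_2)\} \in A$ lies in $[W]^2$ iff $s \in T$, which is precisely the defining condition of $A^T$. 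In the clearing-target case the interference set is not changed, so $I \cap [W]^2$ is already the right set, and an at-most edge $\{(s,c_1),(s,c_2)\} \in A$ lies in $[W]^2$ iff $c_1,c_2 \le c^*$, i.e.\ iff $c_1,c_2 \in C_s'$, which is the defining condition of the subproblem's at-most edges. In both cases the two pieces agree with $E(G')$, so $E(G') = E(G) \cap [W]^2$.

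I do not expect a real obstacle; the argument is bookkeeping with the ambient edge set $[V(G)]^2$ and the distributivity of intersection over union. The only point that needs a word of care is that a subproblem formally has the smaller channel set $C' = \bigcup_{s} C_s'$ rather than $C$: since the constraint-graph definition never mentions $C$ except through the domains (the condition $c \in C_s$ already forces $c \in C$), replacing $C$ by $C'$ changes neither $V(G')$ nor $E(G')$, so the case computations above are unaffected. Once that is noted, the two cases complete the proof.
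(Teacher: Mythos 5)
Your proposal is correct and follows the same route as the paper: the paper's proof simply describes $G'$ as the result of deleting the vertices with $c > c^*$ (resp.\ with $s \notin T$) from $G$ while keeping all edges among the survivors, and your argument is the detailed verification that the subproblem's constraint graph really does coincide with that vertex-deleted (hence induced) subgraph. Your explicit check that $E(G') = E(G) \cap [V(G')]^2$, and your remark about the shrunken ambient channel set, supply bookkeeping the paper leaves implicit but introduce no new idea.
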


		\begin{proof}
			For the subproblem \(\BRPF_{c^*}(I, D)\), we create \(G'\) by removing from \(G\) any vertex corresponding to a channel greater than \(c^*\) but keep all edges not adjacent to the removed vertices.
			For the subproblem \(\BRPF^T(I, D)\), we remove any vertex corresponding to a station in \(S \setminus T\), but keep all edges not adjacent to the removed vertices.
		\end{proof}

		Later on, we will also consider two subgraphs of \(G\), which we will call \(G_I\) and \(G_I'\).
		\(G_I\) has just the interference constraints' edges: \(G_I = G - A\), where $A$ is the set of at-most constraints from Equation~\eqref{eq:atmost-edges}.
		In Section~\ref{component-partitioning}, we will define $G_I'$ so that \(G_I \subseteq G_I' \subseteq G\) and $G_I'$ has the same components as $G_I$.

		Notice that the at-most constraints form cliques in \(G\) from the vertices corresponding to each station.
		This gives us an easy lower bound on the clique number $\omega(G) \ge \max_{s \in S}{|C_s|}$.

	\subsection{Reduction to the maximum independent set problem}\label{reduction-to-mis}
		We can characterize $\BRPF(I,D)$ using its constraint graph \(G\) by identifying sets of vertices on \(G\) with assignments in $\BRPF(I,D)$.
		In set theory, a function $f : \mathcal{D} \to \mathcal{C}$ is defined to be a set of ordered pairs like $(x, y)$ where $x$ is in the domain $\mathcal{D}$ and $y$ is in the co-domain $\mathcal{C}$.
		We write this as $f \subseteq \mathcal{D} \times \mathcal{C}$.
		However, for $f$ to be a function and not just a relation, we require that for all \(x \in \mathcal{D}\) there exists at most one \(y \in \mathcal{C}\) such that $(x, y) \in f$, so we can write $f(x) = y$.
		Earlier, we defined assignments to be functions $f : S \to C$ and the vertex set \(V(G)\) of the constraint graph \(G\) to be a subset of \(S \times C\).
		That is, both \(f\) and \(V(G)\) are sets of station-channel pairs \((s, c)\) for \(s \in S\) and \(c \in C\).
		This means that we might be able to find subsets of vertices \(U \subseteq V(G)\) such that \(U\) is a function and thus an assignment.
		These subsets turn out to be $G$'s \introduce{independent sets} of vertices, sets in which none of the vertices are adjacent.

		In light of the fuzzy distinction between functions and sets of ordered pairs, we will think of the BRP constraints as being defined not specifically for assignments but for arbitrary subsets of \(S \times C\).
		We will use ordered pair notation rather than function notation for parts of the proof below.

		\begin{theorem}\label{thm:reduction-to-mis}
			Let \((I, D)\) be constraints, \(G\) be the constraint graph of \(\BRPF(I, D)\), $S$ and $C$ be the station and channel sets of $(I, D)$, and \(U \subseteq V(G)\).
			Let \(A\) denote the at-most constraint edges as defined in Equation~\eqref{eq:atmost-edges}.
			Then all the following hold:

			\begin{enumerate}[label=\thetheorem(\alph*)]
				\item \label{thm:reduction-to-mis-domain}
					\(U\) satisfies the domain constraints \(D\).
				\item \label{thm:reduction-to-mis-interference}
					\(U\) satisfies the interference constraints if and only if none of
					the vertices of \(U\) share an edge in \(I\).
				\item \label{thm:reduction-to-mis-assignment}
					\(U\) is an assignment if and only if none of the vertices of \(U\)
					share an edge in \(A\).
				\item \label{thm:reduction-to-mis-at-most}
					\(U\) satisfies the at-most constraints if and only if none of the
					vertices of \(U\) share an edge in \(A\).
				\item \label{thm:reduction-to-mis-feasible}
					\(U\) is a feasible assignment if and only if \(U\) is independent in
					\(G\).
				\item \label{thm:reduction-to-mis-complete-feasible}
					\(U\) is a complete, feasible assignment if and only if \(U\) is
					independent and \(|U| = |S|\).
			\end{enumerate}
		\end{theorem}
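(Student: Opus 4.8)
The plan is to establish the six items in their stated order, since each rests on the earlier ones, and throughout to regard $U$ merely as a subset of $S \times C$ (a relation), reading ``$U$ is an assignment'' as ``$U$ is functional'' and ``$U$ satisfies a given constraint'' in the evident way for relations. Item~\ref{thm:reduction-to-mis-domain} is then immediate from the definition of the vertex set: since $U \subseteq V(G) = \{(s,c) \in S \times C \mid c \in C_s\}$, every pair $(s,c) \in U$ already has $c \in C_s$, which is exactly what satisfying $D$ means. (This is precisely why $V(G)$ was defined this way: domain-feasibility comes for free for any vertex subset.)

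For item~\ref{thm:reduction-to-mis-interference} I would unwind the contrapositive form of an interference constraint given in Section~\ref{sec:constraints}: $U$ violates $((s_1,c_1),(s_2,c_2)) \in I$ precisely when both $(s_1,c_1) \in U$ and $(s_2,c_2) \in U$. The one point requiring care is that a constraint in $I$ may name a pair $(s,c)$ with $c \notin C_s$; such a pair is not a vertex of $G$, hence never lies in $U \subseteq V(G)$, so the constraint is vacuously satisfied and, by construction, contributes no edge to $G$ (recall $E(G)$ retains only $I \cap [V(G)]^2$). Therefore $U$ satisfies every interference constraint iff it violates none of those with both endpoints in $V(G)$, iff no two vertices of $U$ are joined by an edge arising from $I$. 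I expect this vacuous-satisfaction bookkeeping to be the one mildly delicate step; the rest is direct.

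For items~\ref{thm:reduction-to-mis-assignment} and~\ref{thm:reduction-to-mis-at-most}, note that ``$U$ is an assignment'' and ``$U$ satisfies the at-most constraints'' assert the same thing, namely that $U$ is functional: no station is paired with two distinct channels. If a station $s$ is paired in $U$ with $c_1 \ne c_2$, then $c_1, c_2 \in C_s$ by item~\ref{thm:reduction-to-mis-domain}, so $\{(s,c_1),(s,c_2)\}$ is one of the at-most edges in $A$ of Equation~\eqref{eq:atmost-edges}; conversely any $A$-edge contained in $U$ exhibits such a station. Hence both claims reduce to the assertion that $U$ is functional iff $U$ contains no edge of $A$.

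Item~\ref{thm:reduction-to-mis-feasible} is then just the conjunction of the above: $U$ is a feasible assignment iff it is functional, satisfies $D$, and satisfies the interference constraints; the middle condition holds automatically by item~\ref{thm:reduction-to-mis-domain}, while by items~\ref{thm:reduction-to-mis-interference}--\ref{thm:reduction-to-mis-at-most} the other two say exactly that $U$ spans no edge of $A$ and no edge of $I \cap [V(G)]^2$. Since $E(G) = (I \cap [V(G)]^2) \cup A$, this is precisely independence of $U$ in $G$. Finally, for item~\ref{thm:reduction-to-mis-complete-feasible}, a feasible assignment is a function, so $|U|$ equals the cardinality of its domain $S' \subseteq S$; because $S$ is finite, $|U| = |S|$ iff $S' = S$ iff $U$ is complete, and combining this with item~\ref{thm:reduction-to-mis-feasible} yields the statement. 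The whole argument is essentially definition-chasing; the only genuine obstacle is correctly handling out-of-domain pairs named in $I$ when proving item~\ref{thm:reduction-to-mis-interference}.
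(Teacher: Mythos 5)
Your proof is correct and takes essentially the same approach as the paper's: part (a) falls out of the definition of $V(G)$, parts (b)--(d) translate the interference and at-most constraints into statements about edges of $I$ and $A$ respectively, and parts (e)--(f) follow by combining these with $E(G) = \left(I \cap [V(G)]^2\right) \cup A$ and a cardinality count. Your explicit bookkeeping for interference constraints that name station-channel pairs outside $V(G)$ is a small point of care the paper elides, but it does not change the argument.
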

		\begin{proof}
			Keep in mind that $G$'s edges are either interference constraints or at-most constraints: $E(G) = (I \cup A) \cap [V(G)]^2$.

			\begin{enumerate}[label=\thetheorem(\alph*)]
				\item
					Satisfaction of the domain constraints comes directly from the definition of \(G\)'s vertex set.
				\item
					The interference constraints are defined by the proposition \((\forall \{(s_1, c_1), (s_2, c_2)\} \in I) ((s_1, c_1) \in U \implies (s_2, c_2) \not\in U)\), which is equivalent to the proposition
					\((\forall (s_1, c_1) \in U)(\forall (s_2, c_2) \in U) (\{(s_1, c_1), (s_2, c_2)\} \not\in I)\), which is equivalent to the proposition that none of the vertices of \(U\) share an edge in \(I\).
				\item
					Let \(S'\) be the set of stations corresponding to vertices in \(U\), i.e., \(S' = \{s \in S \mid \exists c \in C : (s, c) \in U\}\).
					Since \(U\) is already a subset of \(S' \times C\), \(U\) is an assignment if and only if \(U\) is a function, which is defined by the proposition \((\forall s \in S')((s, c_1), (s, c_2) \in U \implies c_1 = c_2)\).
					This is equivalent to the proposition \((\forall s \in S')(c_1 \ne c_2 \implies (s, c_1) \not\in U \lor (s, c_2) \not\in U)\).
					By definition of \(A\), this proposition is equivalent to no two vertices of \(U\) sharing an edge in \(A\).
				\item
					By definition, \ref{thm:reduction-to-mis-at-most} is equivalent to \ref{thm:reduction-to-mis-assignment}.
				\item
					\(U\) is a feasible assignment if and only if \(U\) is an assignment that satisfies the domain, interference, and at-most constraints.
					By \ref{thm:reduction-to-mis-domain}--\ref{thm:reduction-to-mis-at-most}, this occurs if and only if no two vertices of \(U\) share an edge in \(I\) or in \(A\).
					Since all the edges of \(G\) are either in \(I\) or \(A\), \(U\) is a feasible assignment if and only if no two vertices of \(U\) share an edge, which is to say, \(U\) is independent.
				\item
					By \ref{thm:reduction-to-mis-feasible}, $U$ is feasible if and only if $U$ is independent.
					A complete, feasible assignment requires exactly one station-channel pair per station.
					The at-most edges $A$ prevent $U$ from containing more than one station-channel pair per station.
					Thus $U$ is complete if and only if the vertices of $U$ share no edges of $A$ and $|U| = |S|$.
					\qedhere
			\end{enumerate}
		\end{proof}

		Let \(\alpha(G)\) be the \introduce{independence number}, the maximum cardinality of an independent set of vertices in a simple graph \(G\).
		An independent set in \(G\) is a \introduce{maximum independent set} (\introduce{MIS}) in \(G\) if its cardinality is \(\alpha(G)\).

		\begin{corollary}\label{thm:reduction-cor-alpha}
			Assume the same hypotheses as in Theorem~\ref{thm:reduction-to-mis}.
			\(\alpha(G) \le |S|\), with equality if and only if \(\BRPF(I, D)\) is non-empty.
			If \(\alpha(G) = |S|\), then \(\BRPF(I, D)\) is the set of maximum independent sets of vertices in \(G\).
		\end{corollary}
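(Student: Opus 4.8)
The plan is to derive all three assertions directly from Theorem~\ref{thm:reduction-to-mis}, leaning especially on parts~\ref{thm:reduction-to-mis-assignment} and~\ref{thm:reduction-to-mis-complete-feasible}. First I would establish the bound $\alpha(G) \le |S|$. Let $U \subseteq V(G)$ be any independent set. Since $A \subseteq E(G)$, no two vertices of $U$ share an edge in $A$, so by part~\ref{thm:reduction-to-mis-assignment} the set $U$ is an assignment, i.e.\ a (partial) function from some $S' \subseteq S$ into $C$. A function contains exactly one pair $(s,c)$ for each $s$ in its domain, so $|U| = |S'| \le |S|$. As this holds for every independent set, it holds for a maximum one, giving $\alpha(G) \le |S|$.

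Next I would prove the equivalence $\alpha(G) = |S| \iff \BRPF(I,D) \ne \emptyset$. For the forward direction, suppose $\alpha(G) = |S|$ and let $U$ be a maximum independent set, so $U$ is independent with $|U| = |S|$; by part~\ref{thm:reduction-to-mis-complete-feasible}, $U$ is a complete, feasible assignment, hence $U \in \BRPF(I,D)$ and the feasible set is non-empty. Conversely, if $f \in \BRPF(I,D)$, then, viewing $f$ as its set of station-channel pairs, part~\ref{thm:reduction-to-mis-complete-feasible} tells us $f$ is independent and $|f| = |S|$, so $\alpha(G) \ge |S|$; combined with the bound just proved, $\alpha(G) = |S|$.

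Finally, assuming $\alpha(G) = |S|$, I would show the two inclusions that give $\BRPF(I,D) = \{\,U \subseteq V(G) \mid U \text{ is a maximum independent set of } G\,\}$. If $f \in \BRPF(I,D)$, then by part~\ref{thm:reduction-to-mis-complete-feasible} it is independent with $|f| = |S| = \alpha(G)$, so it is a maximum independent set. Conversely, if $U$ is a maximum independent set, then $|U| = \alpha(G) = |S|$ and $U$ is independent, so part~\ref{thm:reduction-to-mis-complete-feasible} again makes $U$ a complete, feasible assignment, i.e.\ an element of $\BRPF(I,D)$.

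I do not expect a genuine obstacle here: the corollary is essentially a bookkeeping consequence of the set-theoretic identification of assignments with vertex subsets set up before Theorem~\ref{thm:reduction-to-mis}. The one place to be careful is to use the partial-function reading of \emph{assignment} in the bound $|U| \le |S|$ — an arbitrary independent set corresponds only to a feasible, not necessarily complete, assignment — and to invoke \emph{both} directions of part~\ref{thm:reduction-to-mis-complete-feasible} when establishing the final set equality.
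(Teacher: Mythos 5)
Your proposal is correct and follows essentially the same route as the paper's (much terser) proof: bound $|U| \le |S|$ for independent sets by noting they are assignments, then derive the equivalence and the set equality from Theorem~\ref{thm:reduction-to-mis-complete-feasible}. The only cosmetic difference is that you invoke part~\ref{thm:reduction-to-mis-assignment} for the bound where the paper cites part~\ref{thm:reduction-to-mis-feasible}; the substance is identical, and your fuller write-out of the ``remaining statements follow directly'' step is a faithful expansion of what the paper leaves implicit.
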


		\begin{proof}
			Let \(U\) be a maximum independent set.
			Then by Theorem~\ref{thm:reduction-to-mis-feasible}, \(U\) is an assignment, and the at-most constraints (which all assignments satisfy) dictate that \(|U| \le |S|\).
			The remaining statements follow directly from this and Theorem~\ref{thm:reduction-to-mis-complete-feasible}.
		\end{proof}

		The \introduce{maximum independent set problem} (\introduce{MISP}) for a graph $G$ has the optimization variant of finding a MIS of $G$, evaluation variant of calculating $\alpha(G)$, and recognition variant of deciding whether $\alpha(G) \ge k$ for a given $k \in \N$.

		\begin{corollary}\label{thm:brp-feas-to-misp}
			The feasibility variant of the BRP has a polynomial-time reduction to the MISP.
		\end{corollary}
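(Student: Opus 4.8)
The plan is to exhibit a polynomial-time many-one (Karp) reduction that sends a BRP feasibility instance $(I, D)$ to a recognition-variant MISP instance $(G, k)$ whose answer agrees with the BRP feasibility answer. Since the feasibility variant of the BRP is itself a decision problem (return $0$ or $1$), targeting the recognition variant of the MISP (``is $\alpha(G) \ge k$?'') is the natural choice.

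The reduction has two construction steps. First, from $(I, D)$ build the constraint graph $G$ of $\BRPF(I, D)$ exactly as in Section~\ref{sec:brp-graph-defs}; by Theorem~\ref{thm:polynomial-time} this costs $O(|I| + |D||\bigcup D|^2)$ time and, by the remark following that theorem, the resulting $G$ has size at most $O(|D|^2|\bigcup D|^2)$, so both the work and the output are polynomial in the size of $(I, D)$ under a bit model. Second, set $k = |S|$; since $|S| = |D|$, this integer is read directly off the input, so the full map $(I, D) \mapsto (G, |S|)$ is computable in polynomial time. For correctness, apply Corollary~\ref{thm:reduction-cor-alpha}: it gives $\alpha(G) \le |S|$ unconditionally, hence $\alpha(G) \ge |S|$ holds if and only if $\alpha(G) = |S|$, which by the same corollary holds if and only if $\BRPF(I, D) \ne \emptyset$. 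Therefore $(G, |S|)$ is a yes-instance of the MISP recognition problem exactly when the BRP feasibility variant returns $1$ on $(I, D)$, completing the reduction.

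Honestly, there is no deep obstacle here: the substantive content is already carried by Theorem~\ref{thm:polynomial-time} (polynomiality of the graph construction) and Corollary~\ref{thm:reduction-cor-alpha} (the $\alpha(G) = |S|$ characterization of feasibility). The only points needing care are (i) verifying the map is genuinely polynomial in $|(I, D)|$ rather than in the station or channel names — which is precisely why Theorem~\ref{thm:polynomial-time} was phrased to exclude those names from its bound — and (ii) choosing the recognition variant of the MISP so that a decision problem is reduced to a decision problem. Both are immediate once the earlier results are in hand.
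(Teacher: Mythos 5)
Your proposal is correct and follows essentially the same route as the paper: build the constraint graph (with Theorem~\ref{thm:polynomial-time} supplying polynomiality) and invoke Corollary~\ref{thm:reduction-cor-alpha} to equate feasibility of $\BRPF(I,D)$ with $\alpha(G) = |S|$. The only difference is one of packaging: the paper phrases the reduction as calling a MIS solver and checking whether the returned set $U$ satisfies $|U| = |S|$ (a Turing-style use of the optimization variant), whereas you emit the pair $(G, |S|)$ as a many-one reduction to the recognition variant --- both are legitimate readings of ``reduction to the MISP,'' and your choice is, if anything, the tidier decision-problem-to-decision-problem formulation.
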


		\begin{proof}
			Given constraints \((I, D)\), let \(G\) be the constraint graph of \(\BRPF(I, D)\), and let \(S\) be the corresponding set of stations.
			Solve the MISP on \(G\) and return the maximum independent set \(U\).
			By Corollary~\ref{thm:reduction-cor-alpha}, \(\BRPF(I, D)\) is non-empty if and only if \(|U| = |S|\).
			By Theorem~\ref{thm:polynomial-time}, this reduction takes polynomial time.
		\end{proof}

	\subsection{Optimality in the BRP}\label{optimality-of-brp}

		Recall that optimality in the BRP is obtained when we find the solution \(f \in \BRPF(I, D)\) that minimizes \(\|f\|\), the maximum channel assigned to a station.
		We want to convert this optimization problem into the decision problem MIS.
		We do so by creating the \introduce{recognition} problem: Given \((I, D)\) and a real number \(c^*\), return whether \(\|\BRP(I,D)\| \le c^*\).
		We know that \(\|\BRP(I,D)\| \in C\), so we may use binary search to determine which element of \(C\) is the optimal cost.
		This requires \(O(\log|C|)\) calls to the recognition problem \autocite[\S~11.8, p.~517-518]{Bertsimas:1997wc}.
		By definition, this problem is identical to the feasibility problem of determining whether the clearing-target subproblem's feasible set \(\BRPF_{c^*}(I, D)\) is non-empty.
		We may then turn to Corollary \ref{thm:brp-feas-to-misp} to solve this using the MISP.
		We summarize these considerations in the following corollary.

		\begin{corollary}\label{thm:brp-opt-to-misp}
			The optimization variant of the BRP has a polynomial-time reduction to the MISP.
		\end{corollary}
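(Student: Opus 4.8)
The plan is to combine binary search over the channel set with the feasibility reduction of Corollary~\ref{thm:brp-feas-to-misp}, obtaining a polynomial-time Turing (rather than many-one) reduction: the procedure makes $O(\log|C|)$ calls to a MISP oracle, each wrapped in polynomial-time bookkeeping.

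First I would record the observation already implicit in the discussion above: the recognition question ``is $\|\BRP(I,D)\| \le c^*$?'' is identical to the feasibility question ``is $\BRPF_{c^*}(I,D)$ non-empty?'', since a complete feasible assignment of cost at most $c^*$ exists exactly when a complete feasible assignment using only channels $\le c^*$ exists. By Theorem~\ref{thm:subproblems} the constraint graph $G'$ of $\BRPF_{c^*}(I,D)$ is the induced subgraph of $G$ obtained by deleting the vertices whose channel exceeds $c^*$, so $G'$ is constructible in polynomial time (Theorem~\ref{thm:polynomial-time}), and by Corollary~\ref{thm:brp-feas-to-misp} the feasibility question on it reduces in polynomial time to a single MISP instance, namely whether $\alpha(G') \ge |S|$.

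Next I would run binary search on the sorted list of elements of $C$ (sorting costs $O(|C|\log|C|)$, since $C = \bigcup D$ need not be given in order) to find the least $c^* \in C$ for which $\BRPF_{c^*}(I,D)$ is non-empty; this is where the $O(\log|C|)$ oracle calls occur. If no such $c^*$ exists---equivalently, if the call with $c^* = \max C$ reports infeasibility, which is just the original instance---then $\BRPF(I,D) = \emptyset$ and we report that $\BRP(I,D)$ is undefined. Otherwise, holding the minimal feasible $c^*$, I would invoke the MISP optimization oracle once more on the corresponding $G'$ to obtain a maximum independent set $U$; by Corollary~\ref{thm:reduction-cor-alpha}, since $\alpha(G') = |S|$, the set $U$ is a complete feasible assignment for the subproblem, and minimality of $c^*$ forces $\|U\| = c^*$, so $U \in \BRP(I,D)$ and we return it.

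The point I would be most careful to state---the main ``obstacle,'' though it is conceptual rather than technical---is that this is a Cook reduction, not a Karp reduction: we query the MISP oracle $O(\log|C|)$ times and assemble the answer from the responses together with one extra optimization call. Everything done between oracle calls (building induced subgraphs, updating the search interval, translating a returned independent set back into an assignment $f : S \to C$) runs in time polynomial in the size of $(I,D)$ by Theorems~\ref{thm:polynomial-time} and~\ref{thm:subproblems}, so the overall procedure is a polynomial-time reduction.
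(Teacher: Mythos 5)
Your proposal is correct and follows essentially the same route as the paper: binary search over the channel set $C$ with $O(\log|C|)$ recognition queries, each recognition query ``is $\|\BRP(I,D)\| \le c^*$?'' identified with the feasibility of the clearing-target subproblem $\BRPF_{c^*}(I,D)$, which is then handled by Corollary~\ref{thm:brp-feas-to-misp}. Your added care about infeasible instances, recovering the witness assignment, and the Cook-versus-Karp distinction goes slightly beyond what the paper writes out, but the underlying argument is the same.
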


	\subsection{Complexity class}\label{complexity-class}

		The following theorem shows that the BRP turns out to be equivalent to list coloring after all because list coloring is $\mathcal{NP}$-complete \autocite{Kratochvil:1994ht}.
		Recall that $\mathcal{NP}$-completeness applies to the decision/recognition variant of problems.

		\begin{theorem} \label{thm:np-complete}
			The BRP is $\mathcal{NP}$-complete.
		\end{theorem}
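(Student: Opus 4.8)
The plan is to prove the two directions of $\mathcal{NP}$-completeness separately. Recall from Sections~\ref{feasibility-and-optimality} and~\ref{optimality-of-brp} that the decision variants in play are the feasibility variant (is $\BRPF(I,D) \ne \emptyset$?) and the recognition variant of the optimization problem (is $\|\BRP(I,D)\| \le c^*$?); since setting $c^* = \max C$ makes $\BRPF_{c^*}(I,D) = \BRPF(I,D)$, the latter contains the former, and Section~\ref{optimality-of-brp} already went the other way, so it suffices to show the feasibility variant is $\mathcal{NP}$-complete. Membership in $\mathcal{NP}$ is immediate: a complete, feasible assignment $f$ is a polynomial-size certificate --- at most $|S|$ station--channel pairs built from numbers already in the input --- and a verifier checks $f(s) \in C_s$ for each $s$, checks the at most $|I|$ interference constraints, and observes that the at-most constraints hold automatically because $f$ is a function, all in polynomial time. (Equivalently, Corollary~\ref{thm:brp-feas-to-misp} reduces feasibility to the MISP in polynomial time, the MISP's recognition variant is in $\mathcal{NP}$, and $\mathcal{NP}$ is closed under polynomial-time reductions.)

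For $\mathcal{NP}$-hardness I would reduce list coloring, which is $\mathcal{NP}$-complete \autocite{Kratochvil:1994ht} (and contains ordinary $3$-coloring as the all-lists-$\{1,2,3\}$ case), to the feasibility variant of the BRP. Given a list-coloring instance consisting of a simple graph $G$ and lists $L(v)$, first dispatch the degenerate cases in polynomial time: if some $L(v) = \emptyset$, output a fixed infeasible BRP instance; relabel vertices and colors so that $V(G) \subset \N$ and $\bigcup_v L(v) \subset \N$; and, because an isolated vertex is trivially colorable, delete isolated vertices (or attach a private dummy neighbor to each) so that every vertex has an incident edge. Now put $S = V(G)$, $C = \bigcup_v L(v)$, $C_v = L(v)$ for each $v$ with $D = \{C_v\}_{v \in V(G)}$, and
\[
	I = \left\{\, \{(u,c),(v,c)\} \mid \{u,v\} \in E(G) \text{ and } c \in L(u) \cap L(v) \,\right\},
\]
that is, a co-channel interference constraint for every edge of $G$ and every color common to its endpoints' lists; this construction runs in polynomial time. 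A complete assignment $f$ with $f(v) \in C_v$ for all $v$ is exactly a choice of one color from each vertex's list, and it satisfies the interference constraints above if and only if no edge $\{u,v\}$ has $f(u) = f(v)$, i.e., if and only if $f$ is a proper list coloring of $G$. Hence $\BRPF(I,D)$ is precisely the set of proper list colorings, and it is non-empty if and only if the list-coloring instance is a yes-instance; this is exactly a many-one reduction of list coloring to BRP feasibility.

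The only step needing real care --- and which I expect to be the main (if minor) obstacle, rather than any deep difficulty --- is confirming that $(I,D)$ is an admissible BRP instance under the simplifying assumptions of Section~\ref{simplifying-assumptions}. Here: $C = \bigcup D$ by construction; once isolated vertices are removed, every station appears in both $D$ and $I$; $I$ contains no constraint of the form $\{(s,c_1),(s,c_2)\}$; and every element of $I$ has offset $k = 0$, so we may take $\mathcal{O} = \{0\}$, which is of the required form $\{-\max\mathcal{O}, \ldots, \max\mathcal{O}\}$ with $\max\mathcal{O} = 0$. The underlying equivalence ``co-channel-only BRP $=$ list coloring'' is transparent once the constraint-graph picture of Theorem~\ref{thm:reduction-to-mis} is in hand; the work is the bookkeeping above plus being careful that the reduction targets a decision variant covered by Section~\ref{feasibility-and-optimality}. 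Combining $\mathcal{NP}$-hardness with membership in $\mathcal{NP}$ yields the theorem.
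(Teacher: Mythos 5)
Your proof is correct, and it follows the same basic template as the paper's --- establish membership in $\mathcal{NP}$, then establish hardness by encoding a vertex-coloring problem using only co-channel interference constraints --- but the two halves each differ in a worthwhile way. For hardness, the paper reduces from graph $k$-coloring ($k \ge 3$): it sets every domain to the full channel set $C = \{1, \ldots, k\}$ and puts a co-channel constraint on every edge for every channel. You reduce from list coloring directly, letting the lists become the domains, which is slightly more general and arguably more natural given that Section~\ref{model} introduces the BRP as a generalization of list coloring in the first place (the paper even motivates the theorem by citing list coloring's $\mathcal{NP}$-completeness \autocite{Kratochvil:1994ht}, then reduces from the special case anyway). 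Your bookkeeping about empty lists, isolated vertices, and the offset set $\mathcal{O} = \{0\}$ is more care than the paper takes --- its own construction also omits isolated vertices of $Q$ from $I$ --- though since the conditions of Section~\ref{simplifying-assumptions} are normalizations made without loss of generality rather than admissibility requirements, none of that caution is strictly necessary. For membership in $\mathcal{NP}$, the paper argues via Corollary~\ref{thm:brp-feas-to-misp} (a polynomial-time reduction to the MISP, which is in $\mathcal{NP}$), whereas you give the direct certificate-and-verifier argument; both are valid, and yours is more self-contained. The one place you do extra work that the paper skips entirely is the opening reconciliation of the feasibility and recognition variants; the paper simply proves the statement for the feasibility variant.
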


		\begin{proof}
			Corollary~\ref{thm:brp-feas-to-misp} gives a polynomial time reduction of the BRP to the maximum independent set problem, which is \(\mathcal{NP}\)-complete \autocite[p.~1102]{Cormen:2009uw}.
			This shows that the BRP is in \(\mathcal{NP}\) \autocite[\S~11.8, p.~517-518]{Bertsimas:1997wc}.
			We will show that the BRP is \(\mathcal{NP}\)-hard, and thereby $\mathcal{NP}$-complete, by reducing graph \(k\)-coloring to the BRP, since graph \(k\)-coloring is \(\mathcal{NP}\)-complete \autocite{Karp:1972tu}.

			We define an instance of $k$-coloring and of the BRP as follows.
			Let
			\begin{align*}
				Q &\text{ be a simple graph},             & k &\in \N \text{ such that } k \ge 3, \\
				      C &= \{1, \ldots, k\},              & S &= V(Q), \\
				    C_s &= C \text{ for all } s \in S,    & D &= \{C_s\}_{s\in S}, \text{ and} \\
				      I = \rlap{$\displaystyle \left\{\{(s_1,c_1), (s_2,c_2)\} \in [S \times C]^2 \mid s_1s_2\in E(Q) \text{ and } c_1 = c_2\right\}$.} &&&
			\end{align*}
			We will show that \(\BRPF(I, D)\) is non-empty if and only if $Q$ has a proper \(k\)-coloring \(f\).

			Suppose \(\BRPF(I, D)\) is non-empty.
			Then it contains a complete, feasible assignment \(f\) and thus a coloring of $Q$.
			By the domain constraints $D$, \(f\) uses only \(k\) different values, so \(f\) is a \(k\)-coloring of \(Q\).
			The interference constraints \(I\) prevent $f$ from assigning the same channel to stations that are neighboring vertices in \(Q\), so $f$ is a proper $k$-coloring of $Q$.

			Conversely, suppose there is a proper \(k\)-coloring \(f'\) of \(Q\).
			Create \(f\) from \(f'\) by relabeling the range of \(f\) with the elements of \(C\).
			Then \(f\) is a proper \(k\)-coloring of \(Q\) such that \(f(s) \in C_s\) for every \(s \in S\).
			Consequently, \(f\) is an assignment that obeys the domain constraints.
			Finally, \(f\) obeys the interference constraints because if \(s_1s_2 \in E(Q)\) then \(f\)'s being a proper coloring means that \(f(s_1) \ne f(s_2)\).
			Therefore, $f$ respects the interference constraints in \(I\), as it is defined above, and we have \(f \in \BRPF(I, D)\).
		\end{proof}

\section{Decomposition of the constraint graph}\label{decomposition}

	Even though the MIS problem may not have a polynomial time algorithm, the instances of the BRP we are interested in are tractable because of the sparsity of \(G\).
	Prerelease versions of TVRepack, which implements none of the heuristics or caching schemes SATFC does, can solve typical inputs to optimality in half a minute.
	SATFC version 2.0 can solve nearly all the feasibility subproblems \textcite[Figure~5]{Frechette:2016it} tested, each in less than a second.

	\textcite{Frechette:2016it} discuss effective methods for decomposing the constraint graph to help their satisfiability-programming and local-search algorithms run faster.
	We explore how to apply two of their recommendations to the MIS formulation:
	\begin{enumerate}
		\item Underconstrained stations, and
		\item Component partitioning.
	\end{enumerate}
	We can actually take more advantage of the sparsity of $G$ than \textcite{Frechette:2016it} did, so the presentation below is novel.

	The constraint-graph description of the BRP also allows us to quantify exactly what these heuristics do to help us solve the BRP because we can describe them using standard graph-theoretic parameters of the constraint graph.
	We take our problem instance $(I, D)$ from the FCC's November 2015 constraint files \href{http://data.fcc.gov/download/incentive-auctions/Constraint_Files/}{\texttt{Domain.csv} and \texttt{Interference\_Paired.csv}}.
	These are the last constraint files the FCC published before the auction began \autocite{fcc-constraint-files}.

	The FCC's data contain 2,990 stations with offsets $\mathcal{O} = \{-2, \ldots, 2\}$.
	Most stations have co-channel and adjacent $\pm{1}$ interference constraints, but the adjacent $\pm 2$ constraints apply only to the 793 Canadian TV stations, whose owners do not get to bid in the auction.
	Canada agreed to repack their TV stations jointly with the US, which moves the constraints preventing interference with Canada's stations from US stations' domain constraints to their interference constraints \autocite[Appendix~K, \S~2]{fcc-application-procedures}.
	(Mexican stations' channels have been removed from nearby US stations' domains to prevent interference with Mexican broadcasters, who are not in the constraint files themselves \autocite[Appendix~K, \S~3]{fcc-application-procedures}.)
	The resulting constraint graph $G$ has 101,868 vertices and 4,713,968 edges, 0.09 percent of the total possible edges.
	The minimum degree is $\delta(G) = 4$ and the maximum degree is $\Delta(G) = 229$.

	\subsection{Underconstrained stations}\label{underconstrained-stations}

		\textcite{Frechette:2016it} write that a station is \introduce{underconstrained}
		\begin{quote}
			when there exist stations for which, regardless of how every other station is assigned, there always exists some channel into which they can be packed.
			Verifying this property exactly costs more time than it saves; instead, we check it via the sound but incomplete heuristic of comparing a station's available channels to its number of neighboring stations \autocite[p.~5]{Frechette:2016it}.
		\end{quote}

		The FCC actually solves exactly for the full set of underconstrained stations \autocite[Appendix~J]{fcc-application-procedures}, but Fr{\'e}chette's heuristic still simplifies  the feasibility problem considerably.
		The number of stations in the FCC's data satisfying this heuristic is 1,424, nearly half of all stations.
		The constraint graph after removing the underconstrained stations has 54,406 vertices and 2,388,160 edges, each of which are about half the corresponding values for the constraint graph of the full problem.

		We now prove the authors' assertion that their heuristic can identify underconstrained stations.
		For a graph \(G\) and one of its vertices \(v\), denote \(v\)'s degree in \(G\) as
		\(d_G(v)\) and \(v\)'s neighborhood in \(G\) as \(N_G(v)\).

		\begin{lemma}\label{thm:underconstrained}
			Let \(H\) be the interference graph of \(\BRPF(I, D)\) for constraints \((I, D)\) with station set \(S\).
			A station \(s \in S\) is underconstrained if, for \(C_s \in D\), we have \(|C_s| > d_H(s)\).
		\end{lemma}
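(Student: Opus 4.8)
The plan is to argue directly from the definition of \emph{underconstrained}. Fix an arbitrary assignment $f\colon S\setminus\{s\}\to C$ of the remaining stations (I would not even bother assuming $f$ is feasible). It suffices to exhibit a channel $c\in C_s$ for which extending $f$ by $s\mapsto c$ creates no violation: the domain constraint for $s$ holds because $c\in C_s$; the at-most constraint for $s$ holds because $s\notin\operatorname{dom} f$; and by the definition of the interference graph $H$, the only interference constraints $s$ occurs in are shared with $H$-neighbors of $s$. So the set of channels that $f$ forbids to $s$ is
\[
	B \;=\; \bigcup_{t\in N_H(s)} B_t,\qquad B_t \;=\; \bigl\{\,c\in C_s \;\bigm|\; \{(s,c),(t,f(t))\}\in I\,\bigr\},
\]
and it is enough to prove $B\subsetneq C_s$.

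The key step is a per-neighbor bound. In the co-channel case, $\{(s,c),(t,f(t))\}\in I$ forces $c=f(t)$, so $|B_t|\le 1$: each neighbor's current channel rules out at most one channel for $s$. Hence $|B|\le\sum_{t\in N_H(s)}|B_t|\le d_H(s) < |C_s|$ by hypothesis, so $C_s\setminus B\ne\emptyset$; choose any channel in it. Since $f$ was arbitrary, $s$ can always be packed, i.e.\ $s$ is underconstrained. That is the entire argument when $\mathcal{O}=\{0\}$, and it is the skeleton of the general case.

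The main obstacle is the presence of non-zero offsets. With the offset structure of $I$ one only gets $B_t\subseteq (f(t)+\mathcal{O})\cap C_s$ (using that $\mathcal{O}$ is symmetric about $0$), so a single neighbor can forbid up to $|\mathcal{O}|$ channels, and the clean count degrades to the weaker sufficient condition $|C_s| > |\mathcal{O}|\,d_H(s)$. To recover the stated bound $|C_s|>d_H(s)$ one must be careful about which neighbors actually contribute offset constraints --- e.g.\ exploiting that in the FCC data a generic station shares only co-channel interference with $s$, and that adjacent-$\pm1$ always comes bundled with a co-channel constraint so those neighbors are already counted --- or one states and proves the lemma relative to the co-channel interference graph. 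I would flag this dependence on $\mathcal{O}$ explicitly rather than fold it silently into $d_H(s)$.
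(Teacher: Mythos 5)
Your argument is the same extension-and-counting argument the paper uses: fix an assignment of $S\setminus\{s\}$, count the channels of $C_s$ that the neighbors forbid, and observe that fewer than $|C_s|$ are forbidden. The paper's proof simply takes $F(s)\in C_s\setminus\{f(t)\mid t\in N_H(s)\}$ and asserts that $F$ is feasible --- which is precisely the step you decline to make without the co-channel assumption, since removing only the channels the neighbors currently occupy rules out co-channel conflicts but not adjacent-$k$ conflicts. So in the case $\mathcal{O}=\{0\}$ your proof is complete and coincides with the paper's (your version is in fact slightly stronger, since you do not need $f$ to be feasible).

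The obstacle you flag is a genuine gap, and it is a gap in the paper's own proof, not an artifact of your approach. Under the paper's model a neighbor $t$ with $f(t)=c$ can forbid $s$ from every channel in $(c+\mathcal{O})\cap C_s$ (your use of the symmetry of $\mathcal{O}$ to fold both orientations of a constraint into this set is correct, via the undirectedness lemma). Concretely: let $C_s=\{1,2,3\}$, let $s$ have the single $H$-neighbor $t$, and let $I$ contain $\{(t,2),(s,1)\}$, $\{(t,2),(s,2)\}$, $\{(t,2),(s,3)\}$ --- a co-channel plus adjacent-$\pm1$ pattern the model explicitly allows. Then $|C_s|=3>1=d_H(s)$, yet $s$ cannot be packed when $f(t)=2$, so the lemma as stated fails for general $\mathcal{O}$. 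The hypothesis this style of argument actually supports is your weaker one, $|C_s|>|\mathcal{O}|\,d_H(s)$, or more sharply $|C_s|$ exceeding $\sum_{t\in N_H(s)}\max_c|B_t|$ computed from the constraints actually present in $I$; alternatively one can restrict the lemma to co-channel interference. Your instinct to make the dependence on $\mathcal{O}$ explicit rather than silently absorb it into $d_H(s)$ is the correct one, and the paper should have done the same.
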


		\begin{proof}
			Given a station \(s \in S\) and a feasible assignment \(f : S \setminus \{s\} \to C\) we may extend \(f\) to a complete, feasible assignment \(F : S \to C\) by setting \(F(t) = f(t)\) for \(t \in S \setminus \{s\}\) and take
			\(F(s) \in C_s \setminus \{f(t) \mid t \in N_H(s)\}\), which is non-empty because \(|C_s| > d_H(s) = |N_H(s)|\).
		\end{proof}

		Once a preprocessor identifies a set $T^\complement$ of underconstrained stations, we calculate \(f = BRPF^{T}(I, D)\).
		We then extend this assignment $f$ to all of \(S\) by setting \(F(s) = f(s)\) for all \(s \in T^\complement\) and setting \(F(t) = \min\left(C_s \setminus \{f(s) \mid s \in N_H(t)\}\right)\) for each \(t \in T\).
		In other words, we solve all the stations that are not underconstrained the hard way, then greedily assign the underconstrained stations the least channel not used by neighbors in the interference graph $H$.
		While this obtains a feasible solution, it may not be optimal for minimizing the maximum channel we assign to all stations, so this technique is for the feasibility form of the BRP.
		As usual, we may combine iterations of the feasibility-form of the BRP with lower and lower clearing targets to find an optimum solution.

	\subsection{Component partitioning}\label{component-partitioning}

		If we solve the MISP on each of a graph \(G\)'s components, any combination of the MISs found on each of the components gives an independent set on \(G\).
		When $G$ is a BRP constraint graph, the vertices representing a given station form a clique, so each station must show up in exactly one component of $G$.

		In practice, the subgraph $G_I$, whose edges are just the interference constraints, has a lot of components compared to G.
		Using the FCC's November 2015 data, $G$ has 171 components and no isolated vertices.
		With just the interference constraints, $G_I$ has 1,205 nontrivial components and 1,424 isolated vertices.
		(Notice that there are 1,424 underconstrained stations by the analysis of the previous section.)
		The largest three components of $G$ have 89,401; 1,934; and 1,258 vertices respectively.
		The largest three components of $G_I$ have 51,665; 15,357; and 6,200 vertices respectively.

		According to \textcite{Frechette:2016it} this difference in connectedness is partially because the set of channels \(C\)
		is partitioned into three equivalence classes across which there are no
		interference constraints:

		\begin{quote}
			LVHF (channels 1--6), HVHF (channels 7--13), and UHF (channels 14--\(\bar{c}\), excepting 37), where \(\bar{c} \le \) 51 is the largest available UHF channel set by the auction's clearing target, and 37 is never available.
			No interference constraints span the three equivalence classes of channels, giving us a straightforward way of decomposing the problem \autocite[p.~2]{Frechette:2016it}.
		\end{quote}

		This view is incomplete.
		Channels only interfere when they are within one or two (or $\max{\mathcal{O}}$) of each other.
		$G_I$ reflects all of this structure.
		Using the structure of $G_I$ means solving MIS on hundreds of tiny components (fewer than a dozen vertices), which can be much faster than solving on a small number of very large components because practical MIS algorithms can be worst-case exponential.

		However, large independent sets found in components of $G_I$, lacking the at-most-constraints' edges $A$ (Eq.~\ref{eq:atmost-edges}), may violate the at-most constraints.
		We mitigate this by adding back into \(G_I\) at-most constraints that do not cross components of \(G_I\).
		Call this new graph $G_I'$.
		In the FCC's data, $G_I'$ has 1,098,268 more edges than $G_I$ for a total of 3,673,734 edges.
		We find a maximum independent set $U_i$ on each component $i$ of \(G_I'\).
		While $U = \bigcup_i U_i$ is an independent set of $G_I'$, $U$ may violate the at-most constraints in $G$ that crossed components of \(G_I\).
		$U$ satisfies all other constraints, and each $U_i$ contains at most one vertex per station.
		We may choose one arbitrary (alternatively, channel-minimum) vertex in $U$ per station to form a maximum independent set of vertices in all of \(G\).

		\textcite[p.~5]{Frechette:2016it} report that, in their experiments on hundreds of thousands of instances of the BRP, ``runtimes were almost always dominated by the cost of solving the largest component.''
		They concluded that the simplicity of solving the components serially outweighed any possible speed gains from solving them in parallel.
		Moreover, in the scheme we are considering here, as soon as any vertex for a station is chosen from a component, vertices corresponding to that station may be ignored in all subsequent components.
		We exploit this by solving the components of $G_I'$ in ascending order of size.

		To determine infeasibility before we have examined all components of \(G_I'\), we keep track of the vertices explored so far corresponding to each station.
		We maintain $X_i(s)$ equal to the number of vertices corresponding to station $s$ seen in component $i$ of $G_I'$ and all previous components.
		If $X_i(s) = |C_s|$ for some $s \in S$ but we have not yet found a channel assignment for $s$ after exploring component $i$, we know that \(s\) will be assigned no channel.
		The algorithm cannot return a complete assignment, so this instance of the BRP is infeasible.

\section{Zero-one integer programming and satisfiability formulations}\label{sec:zoip}

	Even after decomposing a BRP, we may still have to solve some subproblems the hard way.
	Zero-one integer programming (\introduce{ZOIP}) makes it easy to solve the constraint graph's MIS problem in commonly available software.
	Satisfiability (\introduce{SAT}) solvers are also commonly available, and can be quite powerful for feasibility testing.
	In this section, we don't worry about any of the preprocessing described above or how to combine MIS solutions on components into a MIS of the whole constraint graph.
	The procedures above can work with any black-box MIS solver.
	We just use ZOIP and SAT here as two such possible MIS solvers.

	Let $G$ be a connected subgraph of the constraint graph of some problem instance $\BRPF(I, D)$, and let $S$ be the corresponding set of stations.
	Take $x_v$ as the binary decision variable indicating whether to include vertex $v$ in a MIS.
	The ZOIP to find a maximum independent set of $G$ is
	\begin{align}\label{eq:miszoip}
		\text{maximize }  & \sum_{v \in V(G)} x_v &&\\
		\notag
		\text{such that } & x_u + x_v \le 1 &\text{ for all } uv &\in E(G) \\
		\notag
		                  & x_v \in \{0, 1\} &\text{ for all } v &\in V(G).
	\end{align}

	The SAT program for solving the BRP is similar.
	Take $t_{(s,c)}$ as the SAT variable (a true/false proposition) indicating whether to include $(s,c) \in V(G)$ in an independent set, where $s \in S$ is a station and $c \in C_s$ is a channel in station $s$'s domain.
	A SAT formula in conjunctive normal form is the conjunction of the following disjunctions.
	\begin{align}\label{eq:missat}
		\neg t_{(s_1, c_1)} \vee \neg t_{(s_2, c_2)} &\text{ for all } \left\{(s_1, c_1), (s_2, c_2)\right\} \in E(G) \\
		\notag
		\bigvee_{c \in C_s} t_{(s, c)} &\text{ for all } s \in S
	\end{align}
	The first clause encodes the adjacency relation in $G$, and the second clause requires the selection of at least one channel for each station.
	By Theorem~\ref{thm:reduction-to-mis-complete-feasible}, Equation~\eqref{eq:missat} is feasible if and only if $\BRPF(I, D)$ is feasible.

	\begin{theorem}
		The ZOIP defined by \eqref{eq:miszoip} returns a maximum independent set of graph $G$.
	\end{theorem}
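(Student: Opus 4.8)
The plan is to set up a bijection between the feasible $0$--$1$ points of~\eqref{eq:miszoip} and the independent sets of $G$ under which the objective value at a point equals the cardinality of the corresponding set. Once that correspondence is in place, ``maximize $\sum_{v} x_v$'' is literally ``maximize $|U|$ over independent sets $U$,'' so any optimizer of the program encodes a maximum independent set of $G$.

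First I would show that every feasible point $x$ of~\eqref{eq:miszoip} is the indicator vector $\mathbf{1}_U$ of an independent set $U = \{v \in V(G) \mid x_v = 1\}$: for each edge $uv \in E(G)$, the constraint $x_u + x_v \le 1$ together with $x_u, x_v \in \{0,1\}$ rules out $x_u = x_v = 1$, so $U$ contains no edge of $G$ and is independent. Conversely, for any independent set $U$ the vector $\mathbf{1}_U$ is feasible, since the integrality constraints hold by construction and every edge $uv$ has at most one endpoint in $U$, giving $x_u + x_v \le 1$. This establishes the bijection $x \leftrightarrow U$, and under it $\sum_{v \in V(G)} x_v = |U|$.

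Next I would note that the program actually has an optimal solution: its feasible region is a nonempty finite subset of $\{0,1\}^{V(G)}$---it contains $x = 0$, corresponding to $U = \emptyset$---and the objective is bounded above by $|V(G)|$, so the maximum is attained. Any optimal point $x^\ast$ then corresponds to an independent set $U^\ast$ with $|U^\ast| \ge |U|$ for every independent set $U$, that is, $|U^\ast| = \alpha(G)$; hence $U^\ast$ is a maximum independent set, and reading the solver's output back as $\{v \mid x_v^\ast = 1\}$ returns a MIS of $G$, as claimed. (Connectedness of $G$, assumed in the surrounding discussion for the decomposition machinery, plays no role in this argument.)

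There is no genuinely hard step here. The only points that need care are that the correspondence is exact in both directions---every feasible $0$--$1$ point, not merely a maximal one, is the indicator of an independent set---and that one must appeal to the finiteness of the feasible set so that the argmax is a well-defined solution rather than an unattained supremum.
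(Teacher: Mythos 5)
Your proposal is correct and matches the paper's own argument: both establish the two-way correspondence between feasible $0$--$1$ points and independent sets of $G$, observe that the objective value equals the cardinality of the corresponding set, and conclude that an optimizer encodes a maximum independent set. The only cosmetic difference is that you phrase the correspondence as an explicit bijection and add a remark on attainment of the maximum, while the paper notes non-emptiness via a single-vertex independent set; the substance is identical.
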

	\begin{proof}
		Let $U^*$ be a maximum independent set of $G$.
		($U^*$ is non-empty because a set containing a single vertex is independent.)
		Define for each $v \in V(G)$,
		\[
			x_v = \mathbb{1}_{U^*}(v) =
			\begin{cases}
				1 & v \in U^* \\
				0 & v \not \in U^*.
			\end{cases}
		\]
		$x_v \in \{0, 1\}$ and since $U^*$ is independent, $x_u + x_v \le 1$ for all $uv \in E(G)$.
		Thus $\{x_v\}_{v\in V(G)}$ is a feasible solution to \eqref{eq:miszoip}.
		The cost of this solution is
		\[
			\sum_{v \in V(G)} x_v = |U^*| = \alpha(G).
		\]

		Now let $\{x_v\}_{v\in V(G)}$ be a feasible solution to \eqref{eq:miszoip}.
		Define $U = \{v \in V(G) \mid x_v = 1\}$.
		No two decision variables can take the value one simultaneously if their corresponding vertices share an edge, so $U$ is an independent set.
		The cost of the $x_v$s is
		\[
			\sum_{v \in V(G)} x_v = |U| \le \alpha(G).
			\qedhere
		\]
	\end{proof}

	For a ZOIP solver that uses Gomory cutting planes, certain additional constraints can improve performance, including the constraints $\sum_{v \in U}x_v \le 1$ for any clique $U$ in $G$, and $\sum_{v \in W}x_v \le \frac{|W| - 1}{2}$ for any odd-length cycle $W$ in $G$ \autocite[Example~11.3]{Bertsimas:1997wc}.
	Finding every clique and odd cycle of $G$ can be difficult.
	In fact, finding maximum cliques is just the same as finding MISs in the complement of the graph.
	However, in the BRP we already know how to find some large cliques because the at-most constraints induce one per station.
	We may then augment \eqref{eq:miszoip} by adding explicit at-most constraints:
	\begin{align}\label{eq:miszoip-cliques}
		\text{maximize }  & \sum_{v \in V(G)} x_v &&\\
		\notag
		\text{such that } & x_u + x_v \le 1 &\text{ for all } uv &\in E(G) \\
		\notag
		                  & \sum_{(s, c) \in V(G)} x_{(s,c)} \le 1 &\text{ for all } s &\in S \\
		\notag
		                  & x_v \in \{0, 1\} &\text{ for all } v &\in V(G).
	\end{align}

\section{Conclusion and future research}\label{sec:future-research}
	Starting with test cases the FCC derived from auction simulations, \textcite{Frechette:2016it} were able to solve over 99 percent of BRP instances for feasibility in under a second using SAT and local search solvers together with a clever caching scheme.
	This despite the BRP's $\mathcal{NP}$-completeness and the FCC's data's having millions of constraints.
	But the Incentive Auction is not over and the government's actual repacking of real broadcasters has not yet happened.
	In the foregoing discussion, we made a novel application of constraint graphs to the BRP.
	The BRP's $\mathcal{NP}$-completeness itself means that the constraint-graph technique can clarify a wide class of similar problems.
	Further research into the BRP may therefore improve solvers' speeds in this application and many others.

	Usage of ZOIP to solve the BRP may be good way forward.
	SATFC's authors' initial tests on ZOIP solvers CPLEX and Gurobi were disappointing but they ``did not investigate such alternatives in depth'' \autocite[p.~3]{Frechette:2016it}.
	Future research is required to determine whether the augmented problem decompositions we propose above would make ZOIP solvers competitive with SAT solvers, and whether linear-programming-specific techniques such as dynamic column generation can be used to improve performance.
	In particular, our own experiments have shown that copying data from TVRepack into SAT and ZOIP solvers takes considerably more time than solving on sparsely constrained problem instances.

	Breaking down the constraint graph into components and then finding MISs on $G_I'$ works if we solve the optimization problem as a sequence of feasibility problems.
	Ideally, we would ask the ZOIP solver for the MIS in a component that minimizes the global objective function.
	This would be easy if we wanted to find the MIS of the entire constraint graph $G$ in one call to the MIS solver:
	\begin{align} \label{eq:optimize-brp}
		\text{minimize }  & z &&\\
		\notag
		\text{such that } & x_u + x_v \le 1 &\text{ for all } uv &\in E(G) \\
		\notag
		                  & \sum_{v \in V(G)} x_v = |S| && \\
		\notag
		                  & cx_{(s,c)} \le z &\text{ for all } (s, c) &\in V(G) \\
		\notag
		                  & x_v \in \{0, 1\} &\text{ for all } v &\in V(G).
	\end{align}
	By Theorem~\ref{thm:reduction-to-mis-complete-feasible}, the first and second constraints in \eqref{eq:optimize-brp} force the ZOIP to be feasible only if $\BRPF(I, D)$ is non-empty, and the optimal solution is $\BRP(I, D)$.
	The trouble with decomposition arises because in the decomposed problem on $G_I'$, we don't know yet whether every station represented in a component will also be represented in a given MIS of that component.
	In other words, we don't know the independence number of each component, only of the constraint graph as a whole, where $\alpha(G) = |S|$ if and only if $\BRPF(I, D)$ is non-empty.
	More research is required to determine whether, once a MIS has been found in a given component, a solver can easily find other MISs in order to minimize the objective function.
	At the very least, it would be helpful to set up the constraints and objective function in the ZOIP so the solver didn't have to be called multiple times.
	One possibility is that once we have the cardinality of the MIS of a component, we solve the Lagrangian dual of the MIS problem's ZOIP to search for a MIS of the known size while minimizing the maximum channel corresponding to a vertex of the MIS found.

	The BRP is a tractable $\mathcal{NP}$-complete problem when we apply the constraint-graph technique and will become even more so as research progresses.
\bibliography{brp.bib}
\end{document}